\documentclass[12pt]{iopart}
\usepackage{iopams}  
\expandafter\let\csname equation*\endcsname\relax
\expandafter\let\csname endequation*\endcsname\relax
\usepackage{amsmath}
\usepackage{amsfonts}
\usepackage{amscd}
\usepackage{amsthm}
\usepackage{setspace}
\usepackage{amssymb}
\usepackage{braket}
\usepackage{hyperref}
\usepackage{color}
\usepackage{graphicx}

\newtheorem{proposition}{Proposition}
\begin{document}

\title[]{Convergence condition of simulated quantum annealing with a non-stoquastic catalyst}

\author{Yusuke Kimura$^1$\footnote{Present address: Analytical quantum complexity Riken Hakubi Research Team, Riken Center for Quantum Computing (RQC), Wako, Saitama 351-0198, Japan} and Hidetoshi Nishimori$^1$ $^2$ $^3$}

\address{$^1$ International Research Frontiers Initiative, Tokyo Institute of Technology, Shibaura, Minato-ku, Tokyo 108-0023, Japan}

\address{$^2$ Graduate School of Information Sciences, Tohoku University, Sendai 980-8579, Japan}

\address{$^3$ RIKEN, Interdisciplinary Theoretical and Mathematical Sciences (iTHEMS), Wako, Saitama 351-0198, Japan}

\begin{abstract}
The Ising model with a transverse field and an antiferromagnetic transverse interaction is represented as a matrix in the computational basis with non-zero off-diagonal elements with both positive and negative signs and thus may be regarded to be non-stoquastic. We show that the local Boltzmann factors of such a system under an appropriate Suzuki-Trotter representation can be chosen non-negative and thus may potentially be simulated classically without a sign problem if the parameter values are limited to a subspace of the whole parameter space. We then derive conditions for parameters to satisfy asymptotically in order that simulated quantum annealing of this system converges to thermal equilibrium in the long-time limit.
\end{abstract}

\section{Introduction}
Quantum annealing (QA) \cite{Kadowaki1998,Brooke1999,Farhi2001,Santoro2002} has been a subject of active studies in recent years \cite{Albash2018,Hauke2020,Tanaka2017,Grant2020}. One of the outstanding theoretical problems is whether or not there exist optimization problems for which QA outperforms classical algorithms.
Toward this goal, in addition to dedicated classical algorithms designed for specific types of problems, generic approximate algorithms, metaheuristics, are often employed to benchmark the performance of QA.  They include simulated annealing \cite{Kirkpatrick1983}, simulated quantum annealing (SQA) (the path-integral Monte Carlo \cite{Suzuki}), and the spin-vector Monte Carlo \cite{Shin2014}. In particular, SQA occupies a unique position in that it simulates, as its name suggests, thermal equilibrium properties of the transverse-field Ising model  and thus may be regarded as an efficient tool to classically reproduce the outputs of QA provided that simulation is carried out at sufficiently low temperature. This is indeed true as long as static (time independent) properties of the transverse-field Ising model are concerned.  However, the dynamics of SQA is governed by classical stochastic (Monte Carlo) processes, which is fundamentally different from the Schr\"odinger dynamics of QA, and therefore SQA is unable to simulate dynamical aspects of QA \cite{Bando2021}, the latter being one of the important topics of recent research activities in the field \cite{Crosson2020}.

Another constraint of SQA is that the sign problem arises if we include additional terms in the Hamiltonian of the transverse-field Ising model, e.g., terms quadratic in the $x$-component of Pauli matrix with a positive coefficient (to be denoted as $+XX$ and to be called a non-stoquastic catalyst), which sometimes lead to enhanced performance of QA \cite{Seki2012,Seoane2012,Seki2015,Nishimori2017,Albash2019,Takada2019}.  The matrix representation of such a Hamiltonian in the computational basis has non-zero off-diagonal elements with both positive and negative signs and may be regarded to be non-stoquastic
\footnote{
Strictly speaking, non-stoquasticity means that no local basis transformation can eliminate the existence of a positive sign in the off-diagonal elements \cite{Bravyi2009}. In the present paper, we adopt a more restricted criterion to use the computational basis \cite{Albash2019,Takada2019} because this is the conventional basis to represent the Hamiltonian in a practical implementation of SQA using the Suzuki-Trotter formula \cite{Suzuki}.
}.

For non-stoquastic Hamiltonians, SQA is generally considered to be unable to efficiently reproduce even equilibrium properties due to the sign problem coming from negative values of local Boltzmann factors in the Suzuki-Trotter representation of the partition function \cite{Suzuki}.  Nevertheless, we show in the present paper that, if we impose a restriction on possible values of the parameters in the Hamiltonian, local Boltzmann factors can be chosen non-negative under an appropriate choice of the Suzuki-Trotter representation and thus the sign problem may potentially be circumvented. Then we derive a condition for those parameters to satisfy in order for SQA to converge to thermal equilibrium in the long-time limit by generalizing a theory developed in our previous paper for the case without a non-stoquastic catalyst \cite{Kimura202209}.

This paper is structured as follows. In section \ref{sec2.1}, we deduce a condition on the Ising model with a non-stoquastic catalyst that ensures that local Boltzmann factors can be chosen non-negative. We also derive an explicit expression of the corresponding classical Ising model in higher dimensions. In section \ref{sec2.2}, we recapitulate the formulation of the master equation, which governs the SQA dynamics, in terms of the corresponding imaginary-time Schr\"odinger equation as discussed in \cite{Nishimori2014}.  The adiabatic condition for the imaginary-time Schr\"odinger equation is discussed in section \ref{sec3}.
In section \ref{sec4}, we deduce conditions for convergence of SQA to thermal equilibrium in the long-time limit. The paper is concluded in section \ref{sec5}.

\section{Ising model with the transverse interaction term}
\label{sec2}
\subsection{Formulation and sign problem}
\label{sec2.1}
We consider the transverse-field Ising model with a $+XX$ term and analyze its effect on the sign of local Boltzmann factors as well as on the convergence condition to thermal equilibrium under SQA. 

The model we study is defined by the Hamiltonian:
\begin{equation}
\label{Hamiltonian 1}
H(t)=H_{\rm Ising}+H_{\rm TF}(t)+H_{\rm TI}(t).
\end{equation}
Here $ H_{\rm Ising}$  represents the Hamiltonian of the classical Ising model 
\begin{equation}
H_{\rm Ising}=-\sum_{\langle ij\rangle} J_{ij}\sigma^z_i\sigma^z_j,
\label{eq:Ising_H}
\end{equation}
where the interaction $J_{ij}$ is left arbitrary in the present paper. The summation runs over pairs of interacting spins $\langle ij\rangle$. $ H_{\rm TF}(t)$ denotes the transverse field term with a time-dependent coefficient $\Gamma(t) (\ge 0)$:
\begin{equation}
H_{\rm TF}=-\Gamma(t)\sum_{j=1}^N \sigma^x_j,
\end{equation}
and $H_{\rm TI}(t)$ is for the transverse interaction term $(+XX)$
\begin{equation}
\label{AFF term}
H_{\rm TI}(t)=K(t)\sum_{\langle ij\rangle} \sigma_i^x\sigma_j^x,
\end{equation}
where the time-dependent coefficient is non-negative $K(t)\ge 0$.

In SQA, the Hamiltonian is rewritten in terms of a corresponding classical Ising model utilizing the Suzuki-Trotter formula \cite{Suzuki}. 
The system is replicated into $M$ layers in the Trotter direction. To find a way to circumvent the problem of negative local Boltzmann factors, it is important to treat the two terms $H_{\rm TF}$ and $H_{\rm TI}$ simultaneously for a Trotterized representation of the Hamiltonian.  The matrix element connecting the $k$th and the $(k+1)$st Trotter layers for these two contributions together is
\begin{equation}
\label{term_k_matrix}
\bra{\sigma_{k+1}}e^{\frac{\beta\Gamma(t)}{M}\sum_{j=1}^N \sigma^x_j+\sum_{\langle ij\rangle}\frac{-\beta K(t)}{M}\sigma_i^x \sigma_j^x}\ket{\sigma_k},
\end{equation}
where $\beta$ denotes the inverse temperature and $\sigma_k$ stands for a set of $N$ classical Ising variables in the $k$th Trotter layer.
For simplicity, we will use the notation 
\begin{equation}
a:=\frac{\beta\Gamma(t)}{bM},~
b_K:=\frac{\beta K(t)}{M},
\end{equation}
where $b$ represents the number of the nearest neighbors (not to be confused with $b_K$) to compensate for the overcounting of the transverse-field term in the formulation developed below. We will also write 
\begin{equation}
\sigma_1:=\sigma_i^x,~
\sigma_2:=\sigma_j^x
\end{equation}
in the following.

Out of the expression \eqref{term_k_matrix}, the local Boltzmann factor for the interacting pair $\langle ij\rangle$ has the following expansion:
\begin{flalign}
\label{eval_Boltzmann}
&e^{a\left(\sigma_1+\sigma_2\right)-b_K\sigma_1 \sigma_2} \\\nonumber
 =& \cosh^2a\, \cosh b_K-{\rm sinh}^2\, a\, \sinh  b_K+\cosh a\, \sinh  a\big(\cosh b_K-\sinh  b_K\big)(\sigma_1+\sigma_2)\\\nonumber
 +&\big(-\cosh^2 a\, \sinh  b_K+\cosh b_K\, {\rm sinh}^2\, a\big)\sigma_1\sigma_2.
\end{flalign} 
The coefficients of the first line on this right-hand side are clearly positive. The coefficient of the third term is non-negative when the following condition is satisfied:
\begin{equation}
\cosh b_K\, \sinh^2 a \ge \cosh^2 a\, \sinh b_K.
\end{equation}
This condition can be rewritten as follows:
\begin{equation}
\label{cond_coth_K}
\tanh^2\left(\frac{\beta \Gamma}{bM}\right) \ge
\tanh\left(\frac{\beta K}{M}\right) .
\end{equation}
Since the local Boltzmann factors for the original Ising Hamiltonian $H_{\rm Ising}$ are diagonal in the $\sigma_k$ basis and are trivially non-negative, we learn that the condition \eqref{cond_coth_K} is equivalent to the requirement that all local Boltzmann factors are non-negative. This implies that the transition matrix elements for the SQA dynamics are non-negative under this condition, and thus potentially circumventing the sign problem
\footnote{
Practical implementation of SQA involves additional considerations beyond evaluation of local Boltzmann factors including the problem of ergodicity and should therefore be carefully designed \cite{Mazzola2017}. 
}
. 
This is a consequence of the fact that we simultaneously treated $H_{\rm TF}$ and $H_{\rm TI}$ in equation \eqref{term_k_matrix}. Notice that the matrix elements (local Boltzmann factors) corresponding to $H_{\rm TI}$ alone can be negative when $K>0$.

The region where the condition \eqref{cond_coth_K} is satisfied is displayed as the shaded area in figure \ref{figregion}. In QA, we initially set $K=0$ and $\Gamma$ to a very large value, and then proceed to decrease $\Gamma$ toward zero. We simultaneously increase $K$ to a positive value and then decrease it toward zero because the final Hamiltonian should be the target classical Ising model $H_{\rm Ising}$. This protocol can be realized within the constraint of equation \eqref{cond_coth_K} as exemplified in terms of a dashed line in figure \ref{figregion}. 
\begin{figure}[ht]
\begin{center}
\includegraphics[width=.6\textwidth]{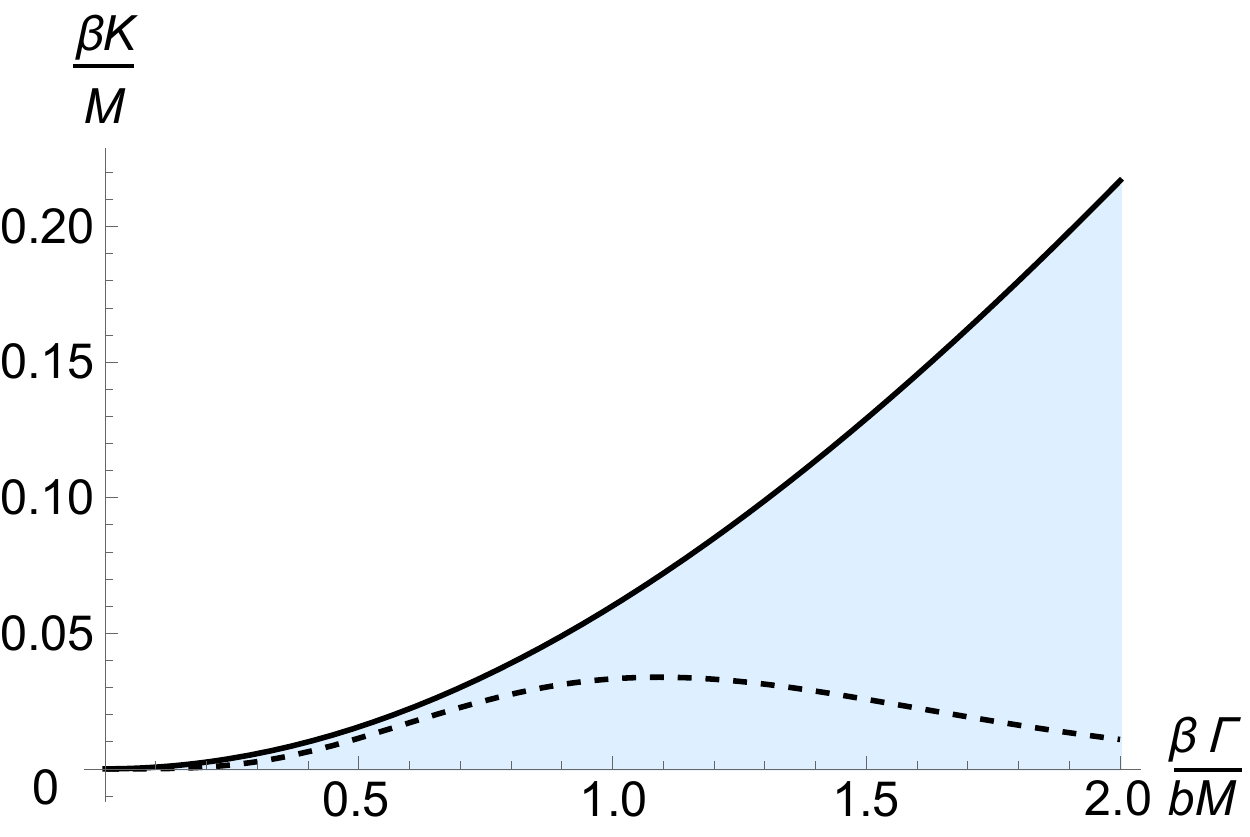}
\caption{\label{figregion}The shaded region satisfies the condition \eqref{cond_coth_K}. The solid curve describes $\tanh\left(\frac{\beta K}{M}\right)=\tanh^2\left(\frac{\beta \Gamma}{bM}\right)$. The shaded region below the solid curve is where the condition \eqref{cond_coth_K} is satisfied. A typical protocol to control the parameters in the shaded region is displayed as a dashed line. We set $\beta/M=1$ and $b=4$ in this figure.}
\end{center}
\end{figure}
SQA therefore would work consistently under this condition.

As detailed in \ref{appendix_1}, the total effective classical Ising model after Trotterization is given by the following expression:
\begin{flalign}
\label{Hamiltonian AFF}
&\beta H_0(\sigma)=\nonumber\\
&-\sum_{k=1}^M\sum_{\langle ij\rangle}\Bigg(\frac{\beta J_{ij}}{M} \sigma_i^{(k)}\sigma_j^{(k)}+ \alpha_1\left(\sigma^{(k+1)}_i\sigma_i^{(k)}+\sigma^{(k+1)}_j\sigma_j^{(k)}\right)+ \alpha_2 \sigma^{(k+1)}_i\sigma_i^{(k)}\sigma^{(k+1)}_j\sigma_j^{(k)}+\alpha_3 \Bigg),
\end{flalign}
where $\sigma$ denotes a configuration of $N$ Ising spins, i.e. $\sigma=\{\sigma_1, \sigma_2, \ldots, \sigma_N\}$, and $\alpha_1$, $\alpha_2$, $\alpha_3$ are  given as follows:
\begin{align}
\alpha_1 & =\frac{1}{4}\log \,\frac{e^{4c_1+2c_2}+e^{-4c_1-2c_2}+2}{e^{4c_1}+e^{2c_2}+e^{-4c_1}+e^{-2c_2}} \label{explicit_alpha1}\\
\alpha_2 & =\frac{1}{4}\log \,\frac{e^{4c_1}+e^{2c_2}+e^{-4c_1}+e^{-2c_2}}{e^{2c_2}+e^{-2c_2}+2} \label{explicit_alpha2}\\
\alpha_3 & =\frac{1}{4}\log \,(e^{4c_1}+e^{2c_2}+e^{-4c_1}+e^{-2c_2})(e^{2c_2}+e^{-2c_2}+2). \label{explicit_alpha3}
\end{align}
We have introduced the following notation:
\begin{equation}
\label{eq:c1c2}
 c_1:=\frac{1}{2}\log \coth a,\quad
 c_2:=\frac{1}{2}\log \coth\left(-b_K\right). 
\end{equation} 
If we write
\begin{equation}
d_2:= \frac{1}{2}\log \coth\left(b_K\right),
\end{equation}
where $d_2$ is real, 
\begin{equation}
\label{exp_c2_d2}
e^{2c_2}=e^{2d_2+i\pi}=-e^{2d_2}.
\end{equation}
Then the condition \eqref{cond_coth_K} can also be written as
\begin{equation}
\label{cond_d2}
e^{2d_2-4c_1}\ge 1.
\end{equation}
Using equation \eqref{exp_c2_d2}, the expression \eqref{explicit_alpha2} for $\alpha_2$ can be rewritten as
\begin{equation}
\label{explicit_alpha2_d2}
\alpha_2=\frac{1}{4}\log \,\frac{e^{2d_2}+e^{-2d_2}-e^{4c_1}-e^{-4c_1}}{e^{2d_2}+e^{-2d_2}-2}
\end{equation}
and the expression \eqref{explicit_alpha1} for $\alpha_1$ becomes
\begin{equation}
\label{explicit_alpha1_d2}
\alpha_1=\frac{1}{4}\log \,\frac{e^{4c_1+2d_2}+e^{-4c_1-2d_2}-2}{e^{2d_2}+e^{-2d_2}-e^{4c_1}-e^{-4c_1}}.
\end{equation}
The expression \eqref{explicit_alpha3} for $\alpha_3$ is likewise expressed as
\begin{equation}
\label{explicit_alpha3_d2}
\alpha_3 =\frac{1}{4}\log \,(e^{2d_2}+e^{-2d_2}-e^{4c_1}-e^{-4c_1})(e^{2d_2}+e^{-2d_2}-2).
\end{equation}
We learn from equations \eqref{explicit_alpha2_d2}, \eqref{explicit_alpha1_d2}, and \eqref{explicit_alpha3_d2} that equation \eqref{cond_d2} (equivalent to the condition \eqref{cond_coth_K}) implies that the coefficients $\alpha_1$, $\alpha_2$, and $\alpha_3$ are real.
Otherwise, those coefficients can become imaginary.

\subsection{Imaginary-time Schr\"odinger equation}
\label{sec2.2}
Let us next consider the following master equation for the Markovian dynamics of SQA based on the classical Ising model $\beta H_0(\sigma)$:
\begin{equation}
\label{master eqn}
\frac{d}{dt}P_\sigma(t)=\sum_{\tilde{\sigma}} W_{\sigma\tilde{\sigma}} P_{\tilde{\sigma}}(t).
\end{equation}
 The symbol $P_\sigma(t)$ represents the probability of the system in configuration $\sigma$ at $t$. We consider the transition matrix of size $2^N$, $\hat{W}$, whose $\sigma\tilde{\sigma}$-entry is $W_{\sigma\tilde{\sigma}}$,  $\hat{W}_{\sigma\tilde{\sigma}}= W_{\sigma\tilde{\sigma}}$. An off-diagonal entry of the transition matrix $\hat{W}$ is given by \cite{Nishimori2014} 
\begin{equation}
W_{\sigma\tilde{\sigma}} = w_{\sigma\tilde{\sigma}}e^{-\frac{\beta}{2}\left(H_0(\sigma)-H_0(\tilde{\sigma}) \right)} \hspace{0.375in} (\sigma\ne \tilde{\sigma}).
\end{equation}

We adopt the heat bath method with single-spin and two-spin flips reflecting equations \eqref{term_k_matrix}-\eqref{eval_Boltzmann}, and thus $w_{\sigma\tilde{\sigma}}$ is expressed as 
\begin{align}
w_{\sigma\tilde{\sigma}}=\frac{ 1}{e^{\frac{\beta}{2}\left(H_0(\tilde{\sigma})-H_0(\sigma) \right)}+e^{-\frac{\beta}{2}\left(H_0(\tilde{\sigma})- H_0(\sigma) \right)}},
\end{align}
where configurations $\sigma$ and $\tilde{\sigma}$ differ by either a single spin or two spins. Using the probability conservation condition $\sum_{\tilde{\sigma}}W_{\tilde{\sigma}\sigma}=0$, a diagonal entry of the transition matrix is given by 
\begin{equation}
W_{\sigma\sigma}=-\sum_{\tilde{\sigma}(\ne\sigma)}e^{-\frac{\beta}{2}\left(H_0(\tilde{\sigma}) -H_0(\sigma) \right)} w_{\tilde{\sigma}\sigma}.
\end{equation}

The master equation for SQA can be rewritten in terms of an imaginary-time Schr\"odinger equation \cite{Henley2004,Castelnovo2005,Nishimori2014}. The following equation yields the quantum Hamiltonian for  the imaginary-time Schr\"odinger equation:
\begin{equation}
\label{quantum Hamiltonian W}
\hat{H}:= -e^{\frac{\beta}{2}\hat{H}_0}\hat{W} e^{-\frac{\beta}{2}\hat{H}_0},
\end{equation}
where $\hat{H}_0$ denotes a diagonal matrix whose $\sigma\sigma$-entry is $H_0(\sigma)$, $(\hat{H}_0)_{\sigma\sigma}=H_0(\sigma)$.

One can rewrite the quantum Hamiltonian \eqref{quantum Hamiltonian W} as follows \cite{Nishimori2014}:
\begin{eqnarray}
\label{explicit qHamiltonian}
& \hat{H} = \sum_{\sigma}\sum_{\sigma'}w_{\sigma\sigma'} \left( e^{-\frac{\beta}{2}\left(H_0(\sigma') -H_0(\sigma) \right)}\ket{\sigma}\bra{\sigma}-\ket{\sigma'}\bra{\sigma} \right)\\\nonumber
& -\sum_{\sigma}\sum_{\sigma''}w_{\sigma\sigma''} \ket{\sigma''}\bra{\sigma}.
\end{eqnarray}
Here, configurations $\sigma$ and $\sigma'$ differ only at a single site, and configurations $\sigma$ and $\sigma''$ differ at two sites, corresponding to equation (\ref{eval_Boltzmann}).

Because configurations $\sigma$ and $\sigma'$ differ only at a single site, say they differ at the $i$th site on the $k$th Trotter slice, using the expression for Hamiltonian $\beta H_0(\sigma)$ \eqref{Hamiltonian AFF}, we find that the following expression yields the difference $\beta H_0(\sigma') - \beta H_0(\sigma)$:
\begin{eqnarray}
\label{difference Hamiltonians}
& \beta H_0(\sigma')- \beta H_0(\sigma)= \sum_{l \, ({\rm n.n.}\, i)} \Bigg(\frac{2\beta}{M}J_{il}\sigma_i^{(k)}\sigma_{l}^{(k)} \\\nonumber
& +2\alpha_2 \left(\sigma^{(k+1)}_i\sigma_i^{(k)}\sigma^{(k+1)}_l\sigma_l^{(k)}+\sigma^{(k)}_i\sigma_i^{(k-1)}\sigma^{(k)}_l\sigma_l^{(k-1)}\right)  \\\nonumber
& + 2\alpha_1\left(\sigma^{(k+1)}_i\sigma_i^{(k)}+\sigma^{(k)}_i\sigma_i^{(k-1)}\right)\Bigg).
\end{eqnarray}
We used the symbol ``$l\, ({\rm n.n.}\, i)$" to indicate summation over sites $l$ interacting with site $i$.

In a similar manner, since configurations $\sigma$ and $\sigma''$ differ at two sites, say they differ at the $i$th site and the $j$th site on the $k$th Trotter slice, the difference $\beta H_0(\sigma'') - \beta H_0(\sigma)$ is given by the following expression:
\begin{flalign}
\label{difference Hamiltonians_two}
& \beta H_0(\sigma'')- \beta H_0(\sigma)= \sum_{l \, ({\rm n.n.}\, i)} 2\alpha_1\left(\sigma^{(k+1)}_i\sigma_i^{(k)}+\sigma^{(k)}_i\sigma_i^{(k-1)}\right)\\\nonumber
& +\sum_{m \, ({\rm n.n.}\, j)} 2\alpha_1\left(\sigma^{(k+1)}_j\sigma_j^{(k)}+\sigma^{(k)}_j\sigma_j^{(k-1)}\right)\\\nonumber
& + \sum_{l \, ({\rm n.n.}\, i), l\ne j} \left(\frac{2\beta}{M}J_{il}\sigma_i^{(k)}\sigma_{l}^{(k)} +2\alpha_2 \left(\sigma^{(k+1)}_i\sigma_i^{(k)}\sigma^{(k+1)}_l\sigma_l^{(k)}+\sigma^{(k)}_i\sigma_i^{(k-1)}\sigma^{(k)}_l\sigma_l^{(k-1)}\right)\right)\\\nonumber
& + \sum_{m \, ({\rm n.n.}\, j), m\ne i} \left(\frac{2\beta}{M}J_{mj}\sigma_m^{(k)}\sigma_{j}^{(k)} +2\alpha_2 \left(\sigma^{(k+1)}_m\sigma_m^{(k)}\sigma^{(k+1)}_j\sigma_j^{(k)}+\sigma^{(k)}_m\sigma_m^{(k-1)}\sigma^{(k)}_j\sigma_j^{(k-1)}\right)\right).
\end{flalign}
We used $\sum_{l \, ({\rm n.n.}\, i), l\ne j}$ to represent the sum over sites that interact with the $i$th site, excluding the $j$th site. Because the spins are flipped at the $i$th and $j$th sites for the configuration $\sigma''$, the difference vanishes in the sum for the interacting pair $\langle i,j \rangle$. 

For the sake of brevity, the difference \eqref{difference Hamiltonians} is denoted as $-2\beta H_{i,k}$:
\begin{eqnarray}
    -\beta H_{i,k}:=\frac{1}{2}\big(\beta H_0(\sigma')-\beta H_0(\sigma)\big).
\end{eqnarray}
Similarly, we define
\begin{eqnarray}
\label{definition of difference_two}
& -\beta H_{i,j,k}:= \frac{1}{2}\big(\beta H_0(\sigma'')-\beta H_0(\sigma)\big).
\end{eqnarray}
We rewrite the diagonal coefficient in \eqref{explicit qHamiltonian} as:
\begin{equation}
\label{diag equality}
w_{\sigma\sigma'}e^{-\frac{\beta}{2}\left(H_0(\sigma')- H_0(\sigma) \right)} =\frac{ e^{\beta H_{i,k}}}{e^{\beta H_{i,k}}+ e^{-\beta H_{i,k}}}.
\end{equation}
The off-diagonal coefficients in \eqref{explicit qHamiltonian} are rewritten in an analogous fashion as
\begin{align}
\label{off-diag equality}
w_{\sigma\sigma'}  = \frac{ 1}{e^{\beta H_{i,k}}+e^{-\beta H_{i,k}}}, \quad
w_{\sigma\sigma''}  = \frac{ 1}{e^{\beta H_{i,j,k}}+e^{-\beta H_{i,j,k}}}
\end{align}

The classical master equation \eqref{master eqn} can also be written in the matrix-vector notations as
\begin{equation}
\label{master eqn rewritten}
\frac{d}{dt}\hat{P}(t)=\hat{W}(t) \hat{P}(t).
\end{equation}
Here, $\hat{P}(t)$ is used to represent a vector whose $\sigma$th entry is $P_\sigma(t)$, i.e. $(\hat{P}(t))_\sigma:=P_\sigma(t)$.
We introduce a wavefunction $\phi(t)$ as
\begin{equation}
\phi(t):= e^{\frac{\beta}{2}\hat{H}_0}\hat{P}(t),
\end{equation}
then as discussed in \cite{Nishimori2014} the master equation \eqref{master eqn rewritten} is rewritten as the following the imaginary-time Schr\"odinger equation:
\begin{equation}
\label{imaginary Sch}
-\frac{d\phi(t)}{dt}=\left(\hat{H}(t)-\frac{1}{2}\frac{d}{dt}(\beta \hat{H}_0) \right)\, \phi(t).
\end{equation}


\section{Adiabatic condition}
\label{sec3}
 Since the instantaneous ground state of the imaginary-time Schr\"odinger equation \eqref{imaginary Sch} corresponds to the thermal equilibrium state of the classical Ising Hamiltonian $H_0(\sigma)$ \cite{Nishimori2014}, which is the stationary state of the Master equation \eqref{master eqn}, the adiabatic condition that the system stays close to the instantaneous ground state during the evolution following equation \eqref{imaginary Sch} is equivalent to the condition that the system stays close to equilibrium in the Markovian dynamics of SQA. We notice that the adiabatic condition for the imaginary-time Schr\"odinger equation as discussed below is given in reference \cite{Morita2008}.

We introduce the notation $\hat{\mathcal{H}}(t)$ to represent the operator on the right-hand side of the imaginary-time Schr\"odinger equation \eqref{imaginary Sch}:
\begin{equation}
\mathcal{\hat{H}}(t) = \hat{H}(t) - \frac{1 }{2}\frac{ d}{dt}(\beta\hat{H}_0).
\end{equation}
According to reference \cite{Morita2008}, convergence of SQA to thermal equilibrium in the long-time limit is ensured  provided that the condition
\begin{equation}
\label{adiabatic condition model}
\frac{\left\lVert\frac{d\mathcal{\hat{H}}(t)}{dt}\right\rVert}{\Delta(t)^2}\ll 1.
\end{equation} 
is satisfied for sufficiently large $t$, where $\Delta (t)$ is the energy gap of the instantaneous ground state and the first excited state of $\mathcal{\hat{H}}(t)$ and $\lVert \cdots \rVert$ denotes the operator norm.

We first evaluate an upper bound on the norm of the operator in equation \eqref{adiabatic condition model} using equations \eqref{explicit qHamiltonian} and \eqref{Hamiltonian AFF},
\begin{flalign}
\label{ineq bound}
&\left\lVert\frac{d\mathcal{\hat{H}}(t)}{dt}\right\rVert\nonumber\\
&\le \ell MN \Bigg(\Bigg\lVert\frac{d}{dt}(w_{\sigma\sigma'} e^{-\frac{1}{2}\beta\left(H_0(\sigma')- H_0(\sigma) \right)})\Bigg\rVert+\Bigg\lVert \frac{dw_{\sigma\sigma'}}{dt} \Bigg\rVert +\frac{N-1}{2}\Bigg\lVert \frac{dw_{\sigma\sigma''}}{dt} \Bigg\rVert\Bigg) +\frac{1}{2}\Bigg\lVert \frac{d^2}{dt^2}(\beta \hat{H}_0) \Bigg\rVert,
\end{flalign}
where $\ell$ denotes a constant determined by the number of the sites interacting with a given site. 
Use of equation \eqref{diag equality} yields the following bound on the first term on the right-hand side of the inequality \eqref{ineq bound}: 
\begin{align}
\label{bound_part1}
& \Big\lVert \frac{d}{dt}(w_{\sigma\sigma'} e^{-\frac{\beta}{2}\left(H_0(\sigma')- H_0(\sigma) \right)}) \Big\rVert=\Bigg\lVert \frac{d}{dt}\left(\frac{e^{\beta H_{i,k}}}{ e^{\beta H_{i,k}}+ e^{-\beta H_{i,k}}}\right) \Bigg\rVert=\frac{1}{2}\Bigg\lVert \frac{d}{dt}(\beta H_{i,k})\cdot {\rm sech}^2(\beta H_{i,k})\Bigg\rVert\\ \nonumber
& \le\frac{1}{2}\Bigg\lVert \frac{d}{dt}(\beta H_{i,k})\Bigg\rVert.
\end{align}
Similarly, using equation \eqref{off-diag equality}, we find that
\begin{eqnarray}
\label{bound_part2}
& \Bigg\lVert \frac{d w_{\sigma\sigma'}}{dt} \Bigg\rVert = \Bigg\lVert \frac{d}{dt}\frac{1}{ e^{\beta H_{i,k}}+ e^{-\beta H_{i,k}}} \Bigg\rVert \\\nonumber
& = \frac{1}{2}\Bigg\lVert \frac{d}{dt}(\beta H_{i,k})\cdot {\rm tanh}(\beta H_{i,k}){\rm sech}(\beta H_{i,k}) \Bigg\rVert \le \frac{1}{4}\Bigg\lVert \frac{d}{dt}(\beta H_{i,k})\Bigg\rVert.
\end{eqnarray}
In analogous manners, we obtain the following bound:
\begin{align}
\label{bound_part3}
 \Bigg\lVert \frac{d w_{\sigma\sigma''}}{dt} \Bigg\rVert \le \frac{1}{4}\Bigg\lVert \frac{d}{dt}(\beta H_{i,j,k})\Bigg\rVert.
\end{align}
Furthermore, we obtain
\begin{eqnarray}
\label{bound_part4}
& \Bigg\lVert \frac{d^2}{dt^2}(\beta \hat{H}_0) \Bigg\rVert\le \sum_{k=1}^M\sum_{\langle ij\rangle}\Bigg( \frac{1}{2}\Bigg\lVert\left(\alpha_1\right)''\left(\sigma^{(k+1)}_i\sigma_i^{(k)}+\sigma^{(k+1)}_j\sigma_j^{(k)}\right)\Bigg\rVert \\\nonumber
&+ \Bigg\lVert\left(\alpha_2\right)'' \sigma^{(k+1)}_i\sigma_i^{(k)}\sigma^{(k+1)}_j\sigma_j^{(k)}\Bigg\rVert+\left\lvert\left(\alpha_3\right)''\right\rvert \Bigg)\\\nonumber
& \le M\sum_{\langle ij\rangle}\Big(\left\lvert\left(\alpha_1\right)''\right\rvert + \left\lvert\left(\alpha_2\right)''\right\rvert+\left\lvert\left(\alpha_3\right)''\right\rvert \Big).
\end{eqnarray} 

Next, we evaluate a lower bound of the energy gap $\Delta(t)$ between the ground state and the first excited state.  The coefficient $w_{\sigma\sigma''}$ in the quantum Hamiltonian \eqref{explicit qHamiltonian} has the following bound:
\begin{equation}
w_{\sigma\sigma''}=\frac{1}{e^{\beta H_{i,j,k}}+e^{-\beta H_{i,j,k}}}\ge \frac{1}{2 e^{\lvert\beta H_{i,j,k}\rvert}}.
\label{eq:w_bound}
\end{equation}
References \cite{Somma2007, Morita2007, Morita2008} derived a lower bound of the energy gap for a general Ising model. When the off-diagonal coefficients of the Hamiltonian \eqref{explicit qHamiltonian} are real and non-positive, Hopf's lemma as used in Theorem 2.4 of reference \cite{Morita2008} applies to the evaluation of the energy gap $\Delta(t)$. Since this condition of negative semi-definiteness of off-diagonal elements is equivalent to the condition that $\alpha_1$, $\alpha_2$, and $\alpha_3$ are real (and therefore $H_0$ is real), the method used in reference \cite{Morita2008} applies to the model under discussion given that the condition \eqref{cond_coth_K} is satisfied. 

For the Hamiltonian with two-spin-flip terms under discussion, one needs to determine  which of $N! (w_{\sigma\sigma'})^N$ or $(N/2)!(w_{\sigma\sigma''})^{N/2}$ is smaller according to reference \cite{Morita2008} because the smaller one gives a lower bound of the gap for $t$ larger than a certain threshold value (see discussions in section II.B.2 of \cite{Morita2008}).
To this end, we first note  that $\alpha_1$ approaches $c_1$, which grows indefinitely, and $\alpha_2$ approaches zero, according to equations (\ref{explicit_alpha1}) and (\ref{explicit_alpha2}). See also \ref{appendix_2}. 

Next, since 
\begin{equation}
\beta H_{i,j,k}= 2\beta H_{i,k}+\frac{2\beta}{M}J_{ij}\sigma_i^{(k)}\sigma_{j}^{(k)} +2\alpha_2 \left(\sigma^{(k+1)}_i\sigma_i^{(k)}\sigma^{(k+1)}_j\sigma_j^{(k)}+\sigma^{(k)}_i\sigma_i^{(k-1)}\sigma^{(k)}_j\sigma_j^{(k-1)}\right),
\end{equation} 
we have 
\begin{eqnarray}
w_{\sigma\sigma''} & = \frac{ 1}{e^{\beta H_{i,j,k}}+e^{-\beta H_{i,j,k}}} = \frac{ 1}{e^{2\beta H_{i,k}+2\delta}+e^{-2\beta H_{i,k}-2\delta}} \\\nonumber
& \le {\rm max}\{e^{2\delta}, e^{-2\delta}\}\, \frac{ 1}{e^{2\beta H_{i,k}}+e^{-2\beta H_{i,k}}}.
\end{eqnarray}
We used the notation
\begin{equation}
\delta:=\frac{\beta}{M}J_{ij}\sigma_i^{(k)}\sigma_{j}^{(k)} +\alpha_2 \left(\sigma^{(k+1)}_i\sigma_i^{(k)}\sigma^{(k+1)}_j\sigma_j^{(k)}+\sigma^{(k)}_i\sigma_i^{(k-1)}\sigma^{(k)}_j\sigma_j^{(k-1)}\right)
\end{equation}
for simplicity. We also use $\delta'$ to represent ${\rm max}\{e^{2\delta}, e^{-2\delta}\}$. Then, we obtain the following relation:
\begin{eqnarray}
\label{rel_w_1}
w_{\sigma\sigma''} & \le \frac{ \delta'}{e^{2\beta H_{i,k}}+e^{-2\beta H_{i,k}}}=\frac{ \delta'}{\left(e^{\beta H_{i,k}}+e^{-\beta H_{i,k}}\right)^2-2}\\\nonumber
& =\delta'\, \left(\left(w_{\sigma\sigma'}\right)^{-2}-2\right)^{-1}.
\end{eqnarray}
In the situation discussed in section \ref{sec4}, $w_{\sigma\sigma'}$ becomes sufficiently small as time $t$ becomes large. ($\delta$ approaches $\frac{\beta}{M}J_{ij}\sigma_i^{(k)}\sigma_{j}^{(k)}$ in the limit in the situation as in section \ref{sec4}.) Thus, for sufficiently large $t$, the following inequality holds 
\begin{equation}
\left(w_{\sigma\sigma'}\right)^{-2}-2\ge \left(\frac{1}{2}\right)^{2/N}\left(w_{\sigma\sigma'}\right)^{-2}.
\end{equation}
This inequality yields the following relation:
\begin{equation}
\label{rel_w_2}
2\left(w_{\sigma\sigma'}\right)^N\ge \left(\left(w_{\sigma\sigma'}\right)^{-2}-2\right)^{-N/2}.
\end{equation} 
Therefore, utilizing equations \eqref{rel_w_1} and \eqref{rel_w_2} we obtain the relation
\begin{eqnarray}
\label{rel_w_3}
\Big(\frac{N}{2}\Big)!(w_{\sigma\sigma''})^{N/2} & \le \Big(\frac{N}{2}\Big)!\, \delta'^{N/2}\left(\left(w_{\sigma\sigma'}\right)^{-2}-2\right)^{-N/2} \\\nonumber
& \le \Big(\frac{N}{2}\Big)!\, \delta'^{N/2}\,  2\left(w_{\sigma\sigma'}\right)^N \\\nonumber
& \le N!\left(w_{\sigma\sigma'}\right)^N.
\end{eqnarray}
We assumed that $N$ is sufficiently large to deduce the last line in equation \eqref{rel_w_3}. From this, one learns that $(N/2)!(w_{\sigma\sigma''})^{N/2}$ is smaller for large $t$ for the situation discussed in section \ref{sec4}. 

These arguments and the results obtained in \cite{Somma2007, Morita2007, Morita2008} applied to the model under discussion reveal that the following inequality holds for the energy gap $\Delta(t)$:
\begin{equation}
\label{delta_power}
\Delta(t) \ge A(N)( w_{\sigma\sigma''})^{N/2}.
\end{equation}
Owing to the lower bound of $w_{\sigma\sigma''}$ of equation \eqref{eq:w_bound}, we obtain the following lower bound for the energy gap $\Delta(t)$:
\begin{equation}
\label{delta_power_2}
\Delta(t) \ge\frac{A(N)}{2^{N/2}}e^{-N\lvert\beta H_{i,j,k}\rvert/2}.
\end{equation}
$A(N)$ is independent of $t$; however, it depends of the size of the system $N$ for $N\gg 1$ as
\begin{equation}
A(N)=a\sqrt{N}\, e^{-cN}.
\end{equation}
$a$ and $c$ are used to denote constants that do not depend on $N$ in the large-$N$ limit. As a result, one learns that the energy gap $\Delta(t)$ has the following lower bound:
\begin{equation}
\label{lower bound on Delta}
\Delta(t)\ge\frac{a\sqrt{N}}{2^{N/2}}e^{-N\lvert\beta H_{i,j,k}\rvert/2-cN}.
\end{equation}

Those obtained bounds in equations \eqref{ineq bound}-\eqref{bound_part4} and equation \eqref{lower bound on Delta} are similar to the corresponding bounds in the simpler case discussed in \cite{Kimura202209}, and therefore analyses of convergence condition developed therein apply without significant modifications.

\section{Convergence condition}
\label{sec4}
We closely follow reference \cite{Kimura202209} to derive a condition for convergence of SQA. 

It is useful to consider the case that $\Gamma(t)$ and $K(t)$ take the following forms for sufficiently large $t$:
\begin{align}
& \Gamma(t) = \frac{M}{\beta} {\rm tanh}^{-1}\left( \frac{1}{(c_3 t+c_4)^{g(t)}}\right) \label{function_form1} \\
& K(t) = \frac{M}{\beta} {\rm tanh}^{-1}\left( \frac{1}{(c_3 t+c_4)^{h(t)}}\right), \label{function_form2}
\end{align}
where $g(t)$ and $h(t)$ denote strictly positive and twice differentiable functions. $c_3$ and $c_4$ represent constants with $c_3>0$ strictly.  We now state our main result.

\begin{proposition}
\label{proposition1}
Simulated quantum annealing (SQA) with non-stoquastic transverse interactions converges  in the large-$t$ limit to thermal equilibrium at a given inverse temperature $\beta$, provided that $\Gamma(t)$ and $K(t)$ satisfy equation \eqref{cond_coth_K} and the function $g(t)$ and $h(t)$ in equations \eqref{function_form1} and \eqref{function_form2} satisfy the following conditions:
\begin{align}
    & \lim_{t\to\infty}\left(h(t)-2g(t)\right)\ne 0
    \label{propcondition_1},\\
    &0<g(t)\le \frac{1}{2N}
    \label{propcondition_2},\\
    &
    \lvert g'(t)\rvert \le \frac{c'}{(c_3 t+c_4) \log(c_3 t+c_4)},
    \label{propcondition_4}\\
    & \lvert g''(t)\rvert \le \frac{c''}{(c_3 t+c_4) \log(c_3 t+c_4)} 
    \label{propcondition_5}, \\
    &
    \lvert h'(t)\rvert \le d'
    \label{propcondition_h1},\\    
    & \lvert h''(t)\rvert \le \frac{d''}{(c_3 t+c_4) \log(c_3 t+c_4)} 
    \label{propcondition_h2}, \\
     & \lim_{t\to\infty}\frac{h(t)}{c_3 t+c_4}= 0
    \label{propcondition_3}, \\
    & \lim_{t\to\infty}h'(t)\log\, (c_3 t+c_4)= 0
    \label{propcondition_6}, \\
    &  \lim_{t\to\infty}h''(t)\log\, (c_3 t+c_4)= 0.    
    \label{propcondition_7}
\end{align}
 Here,  equations \eqref{propcondition_2}-\eqref{propcondition_h2} are to hold for sufficiently large $t$.  $c'$, and $c''$ denote positive constants, and $c_3$, $c'$, $c''$, $d'$, and $d''$ are to be selected sufficiently small. 
\end{proposition}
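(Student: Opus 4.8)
The plan is to establish the proposition by verifying the adiabatic condition \eqref{adiabatic condition model} directly for the protocol \eqref{function_form1}--\eqref{function_form2}, since Section \ref{sec3} has already assembled an upper bound \eqref{ineq bound}--\eqref{bound_part4} on the numerator $\lVert d\hat{\mathcal H}/dt\rVert$ and a lower bound \eqref{lower bound on Delta} on the gap $\Delta(t)$. Everything therefore reduces to asymptotic estimates of $\alpha_1,\alpha_2,\alpha_3$ and their first two time derivatives. Abbreviating $u:=c_3t+c_4$ and substituting \eqref{function_form1}--\eqref{function_form2} into \eqref{eq:c1c2}, I would first record $e^{4c_1}\sim b^2 u^{2g(t)}$ and $e^{2d_2}=u^{h(t)}$, and then feed these into \eqref{explicit_alpha1_d2}--\eqref{explicit_alpha3_d2}. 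Using \eqref{propcondition_1} together with the constraint \eqref{cond_d2} to fix $\lim_{t\to\infty}(h(t)-2g(t))>0$, the leading behaviours are $\alpha_1\sim\tfrac12 g(t)\log u$ (which grows), $\alpha_2\sim-\tfrac{b^2}{4}u^{2g(t)-h(t)}\to0$, and $\alpha_3\sim\tfrac12 h(t)\log u$.

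Second, I would differentiate these asymptotics. The derivatives $\alpha_1'$ and $\alpha_1''$ are governed by $g'\log u$, $g/u$ and their second-order analogues, each of which is forced to be $O(1/u)$ by \eqref{propcondition_2}, \eqref{propcondition_4} and \eqref{propcondition_5}; meanwhile $\alpha_2',\alpha_2'',\alpha_3',\alpha_3''$ are controlled through the decaying prefactor $u^{2g-h}$ and through the bounds \eqref{propcondition_h1}, \eqref{propcondition_h2} together with the limits \eqref{propcondition_3}, \eqref{propcondition_6}, \eqref{propcondition_7}, which ensure $h'\log u\to0$ and $h''\log u\to0$. Inserting these into \eqref{bound_part1}--\eqref{bound_part4} shows that $\tfrac{d}{dt}(\beta H_{i,k})$, $\tfrac{d}{dt}(\beta H_{i,j,k})$ and $\alpha_1'',\alpha_2'',\alpha_3''$ are all $O(1/u)$, so the numerator \eqref{ineq bound} decays at least as fast as a constant times $MN/u$, with the constant made small by taking $c_3,c',c''$ small.

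Third, for the gap I would observe that $\lvert\beta H_{i,j,k}\rvert$ is dominated by its $\alpha_1$-terms, giving $\lvert\beta H_{i,j,k}\rvert\le\kappa\,\alpha_1+O(1)$ with $\kappa$ a constant set by the coordination number, so that \eqref{lower bound on Delta} yields $\Delta(t)^2\gtrsim u^{-\kappa N g(t)}$ up to $t$-independent factors. Forming the ratio in \eqref{adiabatic condition model} then produces a bound proportional to $u^{\kappa N g(t)-1}$ times slowly varying prefactors; condition \eqref{propcondition_2}, $g\le 1/(2N)$, bounds $Ng$, and the freedom to take $c_3,c',c''$ small drives the exponent strictly negative, exactly as in \cite{Kimura202209}, so the ratio vanishes as $t\to\infty$. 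Throughout, \eqref{propcondition_1} keeps $u^{h-2g}$ growing, so the sign-problem-free constraint \eqref{cond_d2} (equivalently \eqref{cond_coth_K}) remains satisfied and $\alpha_1,\alpha_2,\alpha_3$ stay real -- which is what made the gap estimate of Section \ref{sec3}, resting on Hopf's lemma and the negative semi-definiteness of the off-diagonal part, legitimate to begin with.

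The step I expect to be the main obstacle is the competition between the indefinitely growing $\alpha_1$ (hence $\lvert\beta H_{i,j,k}\rvert$) and the exponentially small gap: since $\Delta$ enters as $\Delta^2$ with the factor $e^{-N\lvert\beta H_{i,j,k}\rvert}$, even the logarithmic growth $\alpha_1\sim\tfrac12 g\log u$ converts the denominator into a negative power of $u$, and one must certify that after multiplication by $N$ this power stays strictly below the $1/u$ decay supplied by the numerator; calibrating that balance is precisely the content of \eqref{propcondition_2}. The genuinely new difficulty relative to \cite{Kimura202209} is verifying that the $h$-dependent contributions entering $\alpha_2,\alpha_3$ and their derivatives do not upset this balance, for which the limits \eqref{propcondition_3}, \eqref{propcondition_6}, \eqref{propcondition_7} are tailored; the extra two-spin-flip sector itself has already been neutralized in Section \ref{sec3} by the comparison \eqref{rel_w_3} establishing $(N/2)!\,(w_{\sigma\sigma''})^{N/2}\le N!\,(w_{\sigma\sigma'})^{N}$.
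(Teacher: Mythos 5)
Your overall route is the same as the paper's: verify the adiabatic criterion \eqref{adiabatic condition model} by combining the Section~\ref{sec3} bounds \eqref{ineq bound}--\eqref{bound_part4} and \eqref{lower bound on Delta} with asymptotics of $\alpha_1,\alpha_2,\alpha_3$ under \eqref{function_form1}--\eqref{function_form2}, and your leading behaviours ($\alpha_1\sim c_1$, $\alpha_2\sim-\tfrac14 u^{2g-h}\to0$, $\alpha_3\sim d_2$, with $u=c_3t+c_4$) agree with the paper's. However, your closing step contains a genuine error. You assert that ``the freedom to take $c_3,c',c''$ small drives the exponent strictly negative, so the ratio vanishes as $t\to\infty$.'' Choosing prefactor constants small cannot alter a power of $u$. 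In the extremal case $g(t)=\tfrac{1}{2N}$ --- allowed by \eqref{propcondition_2} and in fact the paper's Remark~1 example --- one has $1/\Delta(t)^2\lesssim e^{4Nc_1}=u^{2Ng}=u$ while the relevant numerator terms decay exactly like $1/u$ with prefactors proportional to $c'$, $c''$, $d''$, $c_3 d'$, $c_3/N$; the exponent is then exactly zero and the ratio bound does \emph{not} vanish. The proof must close the way the paper (and \cite{Kimura202209}) closes it: the product of the $O(1/u)$ numerator and the $O(u)$ inverse-gap-squared is bounded by a sum of the tunable constants, which can be made arbitrarily small, and ``arbitrarily small,'' not ``vanishing,'' is what the criterion \eqref{adiabatic condition model} ($\ll 1$) demands. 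This is precisely the content of the paper's final estimate \eqref{bound_on_c2} for $d_2''\,e^{4Nc_1}$. As written, your argument fails on the paper's own canonical protocol.

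A second gap: your claim that $\alpha_2'$ and $\alpha_2''$ are $O(1/u)$ does not follow from the hypotheses. One finds $\alpha_2'\approx\tfrac12 u^{2g-h}\bigl(d_2'-2c_1'\bigr)$, i.e.\ a prefactor $u^{2g-h}$ times quantities that \eqref{propcondition_4}, \eqref{propcondition_6} only force to zero at an unspecified rate; since \eqref{propcondition_1} permits $\lim_{t\to\infty}(h-2g)$ to be any nonzero value, in particular a value in $(0,1)$, the prefactor can decay strictly more slowly than $1/u$, and then neither $\alpha_2'$ nor $\alpha_2''$ is $O(1/u)$. The paper deliberately avoids this claim: for the $\alpha_2$ sector it establishes only the qualitative limits $\alpha_2,\alpha_2',\alpha_2''\to0$ (together with the fact that the $e^{N\lvert\alpha_2(\cdots)\rvert}$ factors in the gap bound tend to one), and it reserves the quantitative $1/u$-versus-$u^{2Ng}\le u$ bookkeeping for the $c_1$ terms (handled in \cite{Kimura202209} via \eqref{propcondition_2}, \eqref{propcondition_4}, \eqref{propcondition_5}) and the $d_2$ term (handled via \eqref{bound_on_c2}), where $O(1/u)$ decay genuinely holds. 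To make your quantitative route work in the $\alpha_2$ sector you would need extra control, e.g.\ restricting $\lim_{t\to\infty}(h-2g)\ge 1$ or strengthening the decay hypotheses on $h'$, rather than asserting $O(1/u)$ decay from the stated conditions.
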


\begin{proof}

When $\Gamma(t)$ and $K(t)$ take the forms \eqref{function_form1} and \eqref{function_form2}, respectively, then $c_1$ and $c_2$ take the following forms:
\begin{align}
& c_1 = \frac{1}{2}\log (c_3 t+c_4)^{g(t)} \\
& d_2 = \frac{1}{2}\log (c_3 t+c_4)^{h(t)}. 
\end{align}
From this we deduce that 
\begin{equation}
\label{exp_h-2g}
e^{2d_2-4c_1}=(c_3 t+c_4)^{h(t)-2g(t)}.
\end{equation}
Compared to the model considered in reference \cite{Kimura202209}, the model under discussion contains additional terms with the coefficients $\alpha_2$ and $\alpha_3$. Our strategy is to confirm that additional contributions of $\alpha_2$ and $\alpha_3$ to the term $\left\lVert\frac{d\mathcal{\hat{H}}(t)}{dt}\right\rVert/{\Delta(t)^2}$ indeed vanish for sufficiently large $t$, and thus those contributions do not affect the condition for convergence to thermal equilibrium derived in reference \cite{Kimura202209}.
 
As shown in \ref{appendix_2},
\begin{equation}
\alpha_2\to 0
\end{equation}
as $t\to \infty$.
Next, we confirm that $\left(\alpha_2\right)'\to 0$ as $t\to\infty$ under the stated conditions.
\begin{equation}
\left(\alpha_2\right)'=\frac{1}{4}\left(\frac{\left(e^{2d_2}+e^{-2d_2}-e^{4c_1}-e^{-4c_1}\right)'}{e^{2d_2}+e^{-2d_2}-e^{4c_1}-e^{-4c_1}}-\frac{\left(e^{2d_2}+e^{-2d_2}-2\right)'}{e^{2d_2}+e^{-2d_2}-2}\right).
\end{equation}
It is straightforward to see that, for $t\gg 1$,
\begin{align}
\label{deriv_alpha_1}
& \frac{\left(e^{2d_2}+e^{-2d_2}-e^{4c_1}-e^{-4c_1}\right)'}{e^{2d_2}+e^{-2d_2}-e^{4c_1}-e^{-4c_1}} \sim \frac{\left(e^{2d_2}-e^{4c_1}\right)'}{e^{2d_2}-e^{4c_1}}\\\nonumber
& = \frac{h'\log \,(c_3 t+c_4)+c_3h(c_3 t+c_4)^{-1}-2g'(c_3 t+c_4)^{2g-h}\log \,(c_3 t+c_4)-2c_3g(c_3 t+c_4)^{2g-h-1}}{1-(c_3 t+c_4)^{2g-h}}.
\end{align}
It is simple to confirm that this tends to zero as $t\to\infty$ under the conditions \eqref{propcondition_1}, \eqref{propcondition_2}, \eqref{propcondition_3}, \eqref{propcondition_4}, and \eqref{propcondition_6}. We need the condition \eqref{propcondition_1} to assure that the denominator in the last line in \eqref{deriv_alpha_1} does not vanish in the limit $t\to\infty$. This means that equation \eqref{cond_coth_K} should hold as a strict inequality.

One can also show that the other term
\begin{equation}
\frac{\left(e^{2d_2}+e^{-2d_2}-2\right)'}{e^{2d_2}+e^{-2d_2}-2}
\end{equation}
vanish in the limit $t\to\infty$ in an analogous manner. 
These demonstrate that
\begin{equation}
\left(\alpha_2\right)'\to 0
\end{equation}
as $t\to \infty$.

It remains to confirm that $\left(\alpha_2\right)''$ approaches zero as $t\to\infty$. 
\begin{align}
& \left(\alpha_2\right)''= \frac{1}{4}\left(\frac{\left(e^{2d_2}+e^{-2d_2}-e^{4c_1}-e^{-4c_1}\right)''}{e^{2d_2}+e^{-2d_2}-e^{4c_1}-e^{-4c_1}}-\left(\frac{\left(e^{2d_2}+e^{-2d_2}-e^{4c_1}-e^{-4c_1}\right)'}{e^{2d_2}+e^{-2d_2}-e^{4c_1}-e^{-4c_1}}\right)^2\right)\\\nonumber
& -\frac{1}{4}\left(\frac{\left(e^{2d_2}+e^{-2d_2}-2\right)''}{e^{2d_2}+e^{-2d_2}-2}-\left(\frac{\left(e^{2d_2}+e^{-2d_2}-2\right)'}{e^{2d_2}+e^{-2d_2}-2}\right)^2 \right).
\end{align}
However, we just saw that the terms 
\begin{equation}
\frac{\left(e^{2d_2}+e^{-2d_2}-e^{4c_1}-e^{-4c_1}\right)'}{e^{2d_2}+e^{-2d_2}-e^{4c_1}-e^{-4c_1}}, \hspace{.2in} \frac{\left(e^{2d_2}+e^{-2d_2}-2\right)'}{e^{2d_2}+e^{-2d_2}-2}
\end{equation}
both approach zero for $t\to\infty$; therefore, it suffices to check that the two terms 
\begin{equation}
\frac{\left(e^{2d_2}+e^{-2d_2}-e^{4c_1}-e^{-4c_1}\right)''}{e^{2d_2}+e^{-2d_2}-e^{4c_1}-e^{-4c_1}}, \hspace{.2in} \frac{\left(e^{2d_2}+e^{-2d_2}-2\right)''}{e^{2d_2}+e^{-2d_2}-2}
\end{equation}
tend to zero as $t\to\infty$. 

We show that 
\begin{equation}
\frac{\left(e^{2d_2}+e^{-2d_2}-2\right)''}{e^{2d_2}+e^{-2d_2}-2}\to 0
\end{equation}
as $t\to\infty$. As per reasoning similar to the previous argument, the term $e^{-2d_2}$ can be ignored, and 
\begin{align}
& \frac{\left(e^{2d_2}+e^{-2d_2}-2\right)''}{e^{2d_2}+e^{-2d_2}-2}\sim \frac{\left(e^{2d_2}-2\right)''}{e^{2d_2}-2}\\\nonumber
& = \frac{h''\log\, (c_3 t+c_4)+2c_3 h'(c_3 t+c_4)^{-1}-c_3^2h(c_3 t+c_4)^{-2}+\left(h'\log\, (c_3 t+c_4)+c_3h(c_3 t+c_4)^{-1}\right)^2}{1-2(c_3 t+c_4)^{-h}}.
\end{align}
This tends to zero as $t\to\infty$ under the conditions \eqref{propcondition_3}, \eqref{propcondition_6}, and \eqref{propcondition_7}.

One can show in an analogous fashion that the term 
\begin{equation}
\frac{\left(e^{2d_2}+e^{-2d_2}-e^{4c_1}-e^{-4c_1}\right)''}{e^{2d_2}+e^{-2d_2}-e^{4c_1}-e^{-4c_1}}
\end{equation} 
also approaches zero for $t\to\infty$ under the stated conditions. These confirm that $\left(\alpha_2\right)''$ approaches zero as $t\to\infty$. 

These computations showed that 
\begin{equation}
\label{alpha2_to_zero}
\alpha_2, \,\left(\alpha_2\right)', \,\left(\alpha_2\right)''\to 0
\end{equation}
as $t\to\infty$ under the stated conditions.

Owing to equations \eqref{ineq bound}, \eqref{bound_part1}, \eqref{bound_part2}, \eqref{bound_part3}, and \eqref{bound_part4}, we have the bound as the following:
\begin{align}
\left\lVert\frac{d\mathcal{\hat{H}}(t)}{dt}\right\rVert\le & \ell MN\left(\frac{3}{4}\Bigg\lVert \frac{d}{dt}(\beta H_{i,k})\Bigg\rVert+\frac{N-1}{8}\Bigg\lVert \frac{d}{dt}(\beta H_{i,j,k})\Bigg\rVert \right) \\\nonumber
& + \frac{M}{2}\sum_{\langle ij\rangle}\Bigg( \left|\left(\alpha_1\right)''\right| + \left|\left(\alpha_2\right)''\right|+\left\lvert\left(\alpha_3\right)''\right\rvert \Bigg).
\end{align}
The bound contains terms $|\left(\alpha_2\right)'|$ and $|\left(\alpha_2\right)''|$. Owing to equation \eqref{lower bound on Delta}, an upper bound on $1/\Delta(t)^2$ includes terms
\begin{equation}
e^{N\big\lvert\alpha_2 \left(\sigma^{(k+1)}_i\sigma_i^{(k)}\sigma^{(k+1)}_l\sigma_l^{(k)}+\sigma^{(k)}_i\sigma_i^{(k-1)}\sigma^{(k)}_l\sigma_l^{(k-1)}\right)\big\rvert}
\end{equation}
and 
\begin{equation}
e^{N\big\lvert\alpha_2 \left(\sigma^{(k+1)}_m\sigma_m^{(k)}\sigma^{(k+1)}_j\sigma_j^{(k)}+\sigma^{(k)}_m\sigma_m^{(k-1)}\sigma^{(k)}_j\sigma_j^{(k-1)}\right)\big\rvert}.
\end{equation}
From equation \eqref{alpha2_to_zero} one learns that all these contributions from $\alpha_2$ (and the derivatives) vanish in the limit $t\to\infty$. 

This particularly means that the quartic term with the coefficient $\alpha_2$ in the Hamiltonian \eqref{Hamiltonian AFF} does not contribute to the upper bound on the term $\lim_{t\to\infty}\left\lVert\frac{d\mathcal{\hat{H}}(t)}{dt}\right\rVert/{\Delta(t)^2}$ in the adiabatic condition \eqref{adiabatic condition model} evaluated in section \ref{sec3}. 

The term $\lim_{t\to\infty}\frac{\left\lVert\frac{d\mathcal{\hat{H}}(t)}{dt}\right\rVert}{\Delta(t)^2}$ in the adiabatic condition also contains terms 
\begin{equation}
\frac{\left(\alpha_3\right)''}{\Delta(t)^2},
\end{equation}
and we need to confirm that these terms can be made arbitrarily small, to conclude the proof. 

We saw that $\alpha_2\to 0$ as $t\to\infty$; therefore, in the limit $t\to\infty$ the time-dependent factor that the lower bound of $\Delta(t)$ \eqref{lower bound on Delta} contains is $e^{-2N\alpha_1}$. Since $c_1, d_2\to\infty$ as $t$ approaches infinity, $\alpha_1$ and $\alpha_3$ tend to the following values as $t\to\infty$:
\begin{align}
\alpha_1\to \frac{1}{4}\log \,\frac{e^{4c_1}}{1-e^{4c_1-2d_2}}=c_1-\frac{1}{4}\log\,(1-e^{4c_1-2d_2}), \\\nonumber
\alpha_3\to \frac{1}{4}\log \,(e^{2d_2}-e^{4c_1})e^{2d_2}=d_2+\frac{1}{4}\log\,(1-e^{4c_1-2d_2}).
\end{align}
Note that the condition \eqref{propcondition_1} and the condition \eqref{cond_coth_K} (which is equivalent to condition \eqref{cond_d2}) imply that $1-e^{4c_1-2d_2}$ is strictly less than one. Owing to the conditions \eqref{propcondition_1} and \eqref{cond_coth_K}, utilizing equation \eqref{exp_h-2g} we find that $1-e^{4c_1-2d_2}$ converges to a strictly positive number as $t\to\infty$. Thus, $\log\,(1-e^{4c_1-2d_2})$ converges to a finite number as $t\to\infty$, which is negligible compared to $c_1$ and $d_2$. Therefore, it suffices to confirm that 
\begin{equation}
d''_2\, e^{4N c_1}
\end{equation}
can be made arbitrarily small.

Because 
\begin{equation}
d_2 = \frac{h(t)}{2}\log (c_3 t+c_4)
\end{equation}
and
\begin{equation}
\big\lvert e^{4N c_1}\big\rvert = \big\lvert c_3 t+c_4\big\rvert^{2Ng(t)}\le \big\lvert c_3 t+c_4\big\rvert,
\end{equation}
we have 
\begin{equation}
\label{bound_on_c2}
\big\lvert d''_2\, e^{4N c_1}\big\rvert\le \frac{1}{2} \left(\frac{-c^2_3 h(t)}{c_3 t+c_4}+h''(t)(c_3 t+c_4)\log (c_3 t+c_4)+2c_3 h'(t) \right).
\end{equation}
The bound on the right-hand side of \eqref{bound_on_c2} can be made arbitrarily small under the conditions \eqref{propcondition_2}, \eqref{propcondition_h1}, \eqref{propcondition_h2}, and \eqref{propcondition_3}. 

Thus the condition of convergence derived in our previous work \cite{Kimura202209} for the case without the $\alpha_2$ term, which is the inequalities \eqref{propcondition_2}, \eqref{propcondition_4}, and \eqref{propcondition_5}, also applies to the model under discussion. This concludes the proof. 
\end{proof}
\noindent {\it Remark 1}.\\
A simple example that satisfies the condition stated in Proposition \ref{proposition1} is 
\begin{align}
g(t)&=\frac{1}{2N}\\
h(t)&=\frac{2}{N}.
\end{align}
\noindent {\it Remark 2}.\\
Equation \eqref{propcondition_1} means that equation \eqref{cond_coth_K} should be satisfied as a strict inequality if we demand that SQA converges, not just that it produces non-negative local Boltzmann factors.

\vspace{2mm}
\noindent
{\it Remark 3}.\\
Equations \eqref{function_form1} and \eqref{function_form2} represent asymptotic expressions of those coefficients for sufficiently large $t$.  For small $t$, the initial condition should be respected, $\Gamma (0) \gg 1$ and $K(0)=0$. Also, at intermediate values of $t$, smaller values of the energy gap may exist than the one we evaluated in the asymptotic regime in the previous section and thus additional conditions may be required.
\vspace{2mm}

\noindent
{\it Remark 4}.\\
By choosing $\beta$ and $M$ sufficiently large, the system becomes sufficiently close to the ground state of the classical Ising model of equation \eqref{eq:Ising_H} in the long-time limit.

\section{Conclusion}
\label{sec5}
We have discussed the properties of SQA for the transverse-field Ising model with an additional non-stoquastic $+XX$ contribution to the Hamiltonian. This last term is known to lead to negative values of local Boltzmann factors if we compute its matrix elements for the Suzuki-Trotter decomposition in the computational basis. We have shown that this problem can be partially circumvented if we consider two contributions to matrix elements simultaneously, one from the transverse field and another from the $+XX$ term, not separately. 

In the parameter space where local Boltzmann factors are non-negative, we have derived conditions for SQA to converge to thermal equilibrium in the long-time limit. This is a generalization of our previous finding for the case without the $+XX$ contribution \cite{Kimura202209}.

It is known that the non-stoquastic $+XX$ term helps to enhance the performance of quantum annealing by reducing the order of phase transitions from first order to second order for mean-field-type problems \cite{Seki2012,Seoane2012,Seki2015,Nishimori2017,Takada2019,Albash2018}.  Unfortunately, in those problems, the shaded region of figure \ref{figregion}, where local Boltzmann factors are non-negative, vanishes in the thermodynamic limit  because of the long-range characteristics of $+XX$ interactions in the mean-field-type formulation in which $K$ is scaled by the system size as $K\to K/N$. It is an interesting topic of further work to see whether non-trivial practical implementation of SQA is possible for certain systems with a non-stoquastic $+XX$ contribution.

\ack
We thank Shunta Arai for useful comments.
This work is based on a project JPNP16007 commissioned by the New Energy and Industrial Technology Development Organization (NEDO).  

\appendix
\section{Derivation of the Trotterized Hamiltonian}
\label{appendix_1}
We derive equation \eqref{Hamiltonian AFF} utilizing the following equality:
\begin{eqnarray}
\label{ij_contribution}
& \bra{\sigma^{(k+1)}_i\sigma^{(k+1)}_j}e^{a\left(\sigma^x_i+\sigma^x_j\right)-b_K\sigma_i^x \sigma_j^x}\ket{\sigma^{(k)}_i\sigma^{(k)}_j}\\ \nonumber
& =\sum_{\sigma_1, \,\sigma_2=\pm 1} \bra{\sigma^{(k+1)}_i\sigma^{(k+1)}_j}e^{a\left(\sigma^x_i+\sigma^x_j\right)}\ket{\sigma_1\sigma_2} \bra{\sigma_1\sigma_2}e^{-b_K\sigma_i^x \sigma_j^x}\ket{\sigma^{(k)}_i\sigma^{(k)}_j}.
\end{eqnarray}
$\{\ket{\sigma_1\sigma_2}\}$ is used to denote an inserted complete basis. We will evaluate equation \eqref{ij_contribution}. The product of this over $\langle ij\rangle$ yields equation \eqref{term_k_matrix}. 

We first evaluate $\bra{\sigma_{k+1}}e^{-b_K\sigma_i^x \sigma_j^x}\ket{\sigma_k}$. We use $\sigma_i^{(k)}(=\pm 1)$ to represent a classical Ising spin on the $k$th Trotter slice at the $i$th site: 
\begin{eqnarray}
\label{exp_sigmasigma}
& \bra{\sigma_{k+1}}e^{-b_K\sigma_i^x \sigma_j^x}\ket{\sigma_k}\\\nonumber
& =A\left(\frac{\sigma_i^{(k)}\sigma_i^{(k+1)}+\sigma_j^{(k)}\sigma_j^{(k+1)}}{2}\right)^2\,e^{\frac{1}{4}\log {\rm coth}(-b_K)\left(\sigma_i^{(k)}\sigma_i^{(k+1)}+\sigma_j^{(k)}\sigma_j^{(k+1)}\right)},
\end{eqnarray}
where
\begin{align}
A=\sqrt{\frac{1}{2}{\rm sinh}\left(-2b_K\right)}.
\end{align}
The value of $\bra{\sigma^{(k+1)}_i}e^{a\sigma^x_i}\ket{\sigma_1}$ is computed as
\begin{equation}
\bra{\sigma^{(k+1)}_i}e^{a\sigma^x_i}\ket{\sigma_1}=\sqrt{\frac{1}{2}{\rm sinh}\left(2a\right)}\, e^{\frac{1}{2}\log\, {\rm coth}\, a\, \sigma^{(k+1)}_i\sigma_1};
\end{equation}
therefore,
\begin{equation}
\label{computation1}
\bra{\sigma^{(k+1)}_i\sigma^{(k+1)}_j}e^{a\left(\sigma^x_i+\sigma^x_j\right)}\ket{\sigma_1\sigma_2}=\frac{1}{2}{\rm sinh}\left(2a\right)\, e^{\frac{1}{2}\log\, {\rm coth}\, a \left(\sigma^{(k+1)}_i\sigma_1+\sigma^{(k+1)}_j\sigma_2\right)}.
\end{equation}
Thus, we deduce that 
\begin{flalign}
\label{comp_Hamiltonian}
& \bra{\sigma^{(k+1)}_i\sigma^{(k+1)}_j}e^{a\left(\sigma^x_i+\sigma^x_j\right)-b_K\sigma_i^x \sigma_j^x}\ket{\sigma^{(k)}_i\sigma^{(k)}_j}\\ \nonumber
& =\frac{1}{2}{\rm sinh}\left(2a\right) A \times\\\nonumber
& \sum_{\sigma_1, \sigma_2=\pm 1}\left(\frac{\sigma_1\sigma_i^{(k)}+\sigma_2\sigma_j^{(k)}}{2}\right)^2 e^{\frac{1}{2}\log\, {\rm coth}\, a \left(\sigma^{(k+1)}_i\sigma_1+\sigma^{(k+1)}_j\sigma_2\right)+\frac{1}{4}\log {\rm coth}\left(-b_K\right)\left(\sigma_i^{(k)}\sigma_1+\sigma_j^{(k)}\sigma_2\right)}\\ \nonumber
& = \frac{1}{2}{\rm sinh}\left(2a\right) A\Bigg(e^{\frac{1}{2}\log\, {\rm coth}\, a \left(\sigma^{(k+1)}_i\sigma_i^{(k)}+\sigma^{(k+1)}_j\sigma_j^{(k)}\right)+\frac{1}{4}\log {\rm coth}\left(-b_K\right)\left(\left(\sigma_i^{(k)}\right)^2+\left(\sigma_j^{(k)}\right)^2\right)}\\ \nonumber
& + e^{-\frac{1}{2}\log\, {\rm coth}\, a \left(\sigma^{(k+1)}_i\sigma_i^{(k)}+\sigma^{(k+1)}_j\sigma_j^{(k)}\right)-\frac{1}{4}\log {\rm coth}\left(-b_K\right)\left(\left(\sigma_i^{(k)}\right)^2+\left(\sigma_j^{(k)}\right)^2\right)} \Bigg) \\ \nonumber
& = \frac{1}{2}{\rm sinh}\left(2a\right) A\Bigg(e^{\frac{1}{2}\log\, {\rm coth}\, a \left(\sigma^{(k+1)}_i\sigma_i^{(k)}+\sigma^{(k+1)}_j\sigma_j^{(k)}\right)+\frac{1}{2}\log {\rm coth}\left(-b_K\right)}\\ \nonumber
& + e^{-\frac{1}{2}\log\, {\rm coth}\, a \left(\sigma^{(k+1)}_i\sigma_i^{(k)}+\sigma^{(k+1)}_j\sigma_j^{(k)}\right)-\frac{1}{2}\log {\rm coth}\left(-b_K\right)} \Bigg).
\end{flalign}
The last line in equation \eqref{comp_Hamiltonian} can be expressed as an exponential, i.e. we have:
\begin{eqnarray}
\label{rewritten_exp}
& e^{\frac{1}{2}\log\, {\rm coth}\, a \left(\sigma^{(k+1)}_i\sigma_i^{(k)}+\sigma^{(k+1)}_j\sigma_j^{(k)}\right)+\frac{1}{2}\log {\rm coth}\left(-b_K\right)} \\\nonumber
& + e^{-\frac{1}{2}\log\, {\rm coth}\, a \left(\sigma^{(k+1)}_i\sigma_i^{(k)}+\sigma^{(k+1)}_j\sigma_j^{(k)}\right)-\frac{1}{2}\log {\rm coth}\left(-b_K\right)}\\ \nonumber
& = e^{\alpha_1\left(\sigma^{(k+1)}_i\sigma_i^{(k)}+\sigma^{(k+1)}_j\sigma_j^{(k)}\right)+\alpha_2 \sigma^{(k+1)}_i\sigma_i^{(k)}\sigma^{(k+1)}_j\sigma_j^{(k)}+\alpha_3}
\end{eqnarray}
for appropriately chosen $\alpha_1, \alpha_2, \alpha_3$. This can be seen as follows: The left-hand side of equation \eqref{rewritten_exp} can be rewritten as 
\begin{eqnarray}
\label{expression1}
& (e^{c_2}+e^{-c_2}) {\rm cosh}^2\, c_1 \\\nonumber
& + (e^{c_2}-e^{-c_2}) \cosh c_1 \sinh  c_1 \left(\sigma^{(k+1)}_i\sigma_i^{(k)}+\sigma^{(k+1)}_j\sigma_j^{(k)}\right)\\\nonumber
& + (e^{c_2}+e^{-c_2}) {\rm sinh}^2\, c_1\, \sigma^{(k+1)}_i\sigma_i^{(k)}\sigma^{(k+1)}_j\sigma_j^{(k)}
\end{eqnarray}
with $c_1$ and $c_2$ as defined in equation \eqref{eq:c1c2}.
The right-hand side can be rewritten as 
\begin{eqnarray}
\label{expression2}
& e^{\alpha_1\left(\sigma^{(k+1)}_i\sigma_i^{(k)}+\sigma^{(k+1)}_j\sigma_j^{(k)}\right)+\alpha_2 \sigma^{(k+1)}_i\sigma_i^{(k)}\sigma^{(k+1)}_j\sigma_j^{(k)}+\alpha_3} \\\nonumber
& =e^{\alpha_3} ({\rm cosh}^2\, \alpha_1 \cosh\alpha_2+ {\rm sinh}^2\, \alpha_1 \sinh  \alpha_2) \\\nonumber
& + e^{\alpha_3} \cosh\alpha_1 \sinh  \alpha_1 (\cosh\alpha_2+ \sinh  \alpha_2)  \left(\sigma^{(k+1)}_i\sigma_i^{(k)}+\sigma^{(k+1)}_j\sigma_j^{(k)}\right) \\\nonumber
& + e^{\alpha_3} ({\rm sinh}^2\, \alpha_1 \cosh\alpha_2+ {\rm cosh}^2\, \alpha_1 \sinh  \alpha_2) \sigma^{(k+1)}_i\sigma_i^{(k)}\sigma^{(k+1)}_j\sigma_j^{(k)}.  
\end{eqnarray}
Equating the two expressions \eqref{expression1} and \eqref{expression2}, one finds that the equality in equation \eqref{rewritten_exp} holds when the following three conditions are satisfied:
\begin{flalign}
\label{three_conditions}
(e^{c_2}+e^{-c_2}) {\rm cosh}^2\, c_1 &=  e^{\alpha_3} ({\rm cosh}^2\, \alpha_1 \cosh\alpha_2+ {\rm sinh}^2\, \alpha_1 \sinh  \alpha_2) \\\nonumber
(e^{c_2}-e^{-c_2}) \cosh c_1 \sinh  c_1 &=  e^{\alpha_3} \cosh\alpha_1 \sinh  \alpha_1 (\cosh\alpha_2+ \sinh  \alpha_2) \\\nonumber
(e^{c_2}+e^{-c_2}) {\rm sinh}^2\, c_1 &=  e^{\alpha_3} ({\rm sinh}^2\, \alpha_1 \cosh\alpha_2+ {\rm cosh}^2\, \alpha_1 \sinh  \alpha_2).
\end{flalign}
The three conditions determine $\alpha_1, \alpha_2,$ and $\alpha_3$. When there are three parameters, $\alpha_1$, $\alpha_2$, and $\alpha_3$, the three conditions admit a solution. 

\section{Limits of coefficients}
\label{appendix_2}
We demonstrate that $\alpha_1$ grows indefinitely and $\alpha_2$ approaches zero as time $t$ tends toward infinity in the situation considered in section \ref{sec4}. 

Since both $K$ and $\Gamma$ approach zero as time $t$ tends toward infinity, given any $\varepsilon_K>0$ and $\varepsilon_{\Gamma}>0$ there is some time, $t_0$, such that for any time $t\ge t_0$, one has
\begin{equation}
\tanh\left(\frac{\beta K}{M}\right)<\varepsilon_K, \hspace{.2in} \tanh\left(\frac{\beta \Gamma}{bM}\right)<\varepsilon_{\Gamma}.
\end{equation}
For the sake of brevity, we utilize the following notations:
\begin{align}
t_K:= & \tanh\left(\frac{\beta K}{M}\right)\\
t_{\Gamma}:= & \tanh\left(\frac{\beta \Gamma}{bM}\right).
\end{align}
In these notations, 
\begin{align}
e^{4c_1} & =\left({\rm coth}\, \frac{\beta\Gamma}{bM}\right)^2=\frac{1}{t_{\Gamma}^2} \\
e^{2d_2} & ={\rm coth}\, \frac{\beta K}{M}=\frac{1}{t_K};
\end{align}
we then obtain
\begin{align}
\alpha_1 & = \frac{1}{4}\log \frac{\frac{1}{t_{\Gamma}^2t_K}+t_{\Gamma}^2t_K-2}{t_K+\frac{1}{t_K}-\frac{1}{t_{\Gamma}^2}-t_{\Gamma}^2}
\\\nonumber
& = \frac{1}{4}\log \frac{1- t_{\Gamma}^2t_K}{t_{\Gamma}^2-t_K}\ge \frac{1}{4}\log \frac{1- \varepsilon_{\Gamma}^2\varepsilon_K}{\varepsilon_{\Gamma}^2}.
\end{align}
Because $\frac{1}{4}\log \frac{1- \varepsilon_{\Gamma}^2\varepsilon_K}{\varepsilon_{\Gamma}^2}$ can be made arbitrarily large by selecting sufficiently small values for $\varepsilon_{\Gamma}$ and $\varepsilon_K$, one learns that $\alpha_1$ approaches infinity as $t\to\infty$. 

One also finds that
\begin{align}
\alpha_2 & = \frac{1}{4}\log \frac{t_K+\frac{1}{t_K}-t_{\Gamma}^2-\frac{1}{ t_{\Gamma}^2}}{t_K+\frac{1}{t_K}-2}=\frac{1}{4}\log \frac{t_{\Gamma}^2\left(1+t_K^2\right)-t_K\left(1+t_{\Gamma}^4\right)}{\left(1-t_K\right)^2t_{\Gamma}^2}\\\nonumber
& \le \frac{1}{4}\log \frac{1+t_K^2}{\left(1-t_K\right)^2}\\\nonumber
& = \frac{1}{4}\log\left(1+\frac{2t_K}{\left(1-t_K\right)^2}\right).
\end{align}
Because $\frac{t_K}{\left(1-t_K\right)^2}$ is a monotone increasing function of $t_K$ on the interval $[0,1)$, for $t_K<\varepsilon_K< 1$, one has the bound:
\begin{equation}
\alpha_2\le \frac{1}{4}\log\left(1+\frac{2\varepsilon_K}{\left(1-\varepsilon_K\right)^2}\right).
\end{equation}
The bound can be made arbitrarily small by choosing a sufficiently small value for $\varepsilon_K$. This shows that $\alpha_2$ approaches zero as $t\to\infty$.

\vspace{5mm}

\begin{thebibliography}{10}

\bibitem{Kadowaki1998}
T.~Kadowaki and H.~Nishimori.
\newblock {Quantum annealing in the transverse Ising model}.
\newblock {\em Phys. Rev. E}, 58:5355, 1998.

\bibitem{Brooke1999}
J.~Brooke, D.~Bitko, T.~F. Rosenbaum, and G.~Aeppli.
\newblock {Quantum annealing of a disordered magnet}.
\newblock {\em Science}, 284:779, 1999.

\bibitem{Farhi2001}
E.~Farhi, J.~Goldstone, S.~Gutmann, J.~Lapan, A.~Lundgren, and D.~Preda.
\newblock {A Quantum Adiabatic Evolution Algorithm Applied to Random Instances
  of an NP-Complete Problem}.
\newblock {\em Science}, 292:472, 2001.

\bibitem{Santoro2002}
G.~E. Santoro, R.~Martonak, E.~Tosatti, and R.~Car.
\newblock {Theory of Quantum Annealing of an Ising Spin Glass}.
\newblock {\em Science}, 295:2427--2430, 2002.

\bibitem{Albash2018}
T.~Albash and D.~A. Lidar.
\newblock {Adiabatic quantum computation}.
\newblock {\em Rev. Mod. Phys.}, 90:015002, 2018.

\bibitem{Hauke2020}
P.~Hauke, H.~G. Katzgraber, W.~Lechner, H.~Nishimori, and W.~D Oliver.
\newblock {Perspectives of quantum annealing: Methods and implementations}.
\newblock {\em Rep. Prog. Phys.}, 83:054401, 2019.

\bibitem{Tanaka2017}
S.~Tanaka, R.~Tamura, and B.~K. Chakrbarti.
\newblock {\em {Quantum Spin Glasses, Annealing and Computation}}.
\newblock Cambridge University Press, 2017.

\bibitem{Grant2020}
E.~Grant and T.~Humble.
\newblock {Adiabatic Quantum Computing}.
\newblock {\em Oxford Research Encyclopedia of Physics}, (July):1, 2020.

\bibitem{Kirkpatrick1983}
S.~Kirkpatrick, C.~D. Gelatt, and M.~P. Vecchi.
\newblock {Optimization by Simulated Annealing}.
\newblock {\em Science}, 220(4598):671, 1983.

\bibitem{Suzuki}
M.~Suzuki.
\newblock {Relationship among Exactly Soluble Models of Critical Phenomena. I:
  2D Ising Model, Dimer Problem and the Generalized XY-Model}.
\newblock {\em Prog. Theor. Phys.}, 46:1337--1359, 1971.

\bibitem{Shin2014}
S.~W. Shin, G.~Smith, J.~A. Smolin, and U.~Vazirani.
\newblock {How "Quantum" is the D-Wave Machine?}
\newblock {\em arXiv:1401.7087}, 2014.

\bibitem{Bando2021}
Y.~Bando and H.~Nishimori.
\newblock {Simulated quantum annealing as a simulator of nonequilibrium quantum
  dynamics}.
\newblock {\em Phys. Rev. A}, 104:022607, 2021.

\bibitem{Crosson2020}
E.~J. Crosson and D.~A. Lidar.
\newblock {Prospects for quantum enhancement with diabatic quantum annealing}.
\newblock {\em Nature Rev. Phys.}, 2021.

\bibitem{Seki2012}
Y.~Seki and H.~Nishimori.
\newblock Quantum annealing with antiferromagnetic fluctuations.
\newblock {\em Phys. Rev. E}, 85:051112, 2012.

\bibitem{Seoane2012}
B.~Seoane and H.~Nishimori.
\newblock {Many-body transverse interactions in the quantum annealing of the
  $p$-spin ferromagnet}.
\newblock {\em J. Phys. A}, 45:435301, 2012.

\bibitem{Seki2015}
Y.~Seki and H.~Nishimori.
\newblock {Quantum annealing with antiferromagnetic transverse interactions for
  the Hopfield model}.
\newblock {\em J. Phys. A}, 48:335301, 2015.

\bibitem{Nishimori2017}
H.~Nishimori and K.~Takada.
\newblock {Exponential Enhancement of the Efficiency of Quantum Annealing by
  Non-Stoquastic Hamiltonians}.
\newblock {\em Frontiers in ICT}, 4:2, 2017.

\bibitem{Albash2019}
T.~Albash.
\newblock {Role of Non-stoquastic Catalysts in Quantum Adiabatic Optimization}.
\newblock {\em Phys. Rev. A}, 99:042334, 2019.

\bibitem{Takada2019}
K.~Takada, Y.~Yamashiro, and H.~Nishimori.
\newblock {Mean-Field Solution of the Weak-Strong Cluster Problem for Quantum
  Annealing with Stoquastic and Non-Stoquastic Catalysts}.
\newblock {\em J. Phys. Soc. Jpn.}, 89:044001, 2020.

\bibitem{Bravyi2009}
S.~Bravyi and B.~Terhal.
\newblock {Complexity of stoquastic frustration-free Hamiltonians}.
\newblock {\em SIAM J. Comput.}, 39:1462, 2009.

\bibitem{Kimura202209}
Y.~Kimura and H.~Nishimori.
\newblock Convergence condition of simulated quantum annealing for closed and
  open systems.
\newblock {\em Phys. Rev. A}, 106:062614, 2022.

\bibitem{Nishimori2014}
H.~Nishimori, J.~Tsuda, and S.~Knysh.
\newblock Comparative study of the performance of quantum annealing and
  simulated annealing.
\newblock {\em Phys. Rev. E}, 91:012104, 2015.

\bibitem{Mazzola2017}
G.~Mazzola and M.~Troyer.
\newblock {Quantum Monte Carlo annealing with multi-spin dynamics}.
\newblock {\em J. Stat. Mech.}, 2017, 2017.

\bibitem{Henley2004}
C~L Henley.
\newblock {From classical to quantum dynamics at Rokhsar-Kivelson points}.
\newblock {\em J. Phys. Cond. Matt.}, 16:S891--S898, 2004.

\bibitem{Castelnovo2005}
C.~Castelnovo, C.~Chamon, C.~Mudry, and P.~Pujol.
\newblock {From quantum mechanics to classical statistical physics: Generalized
  Rokhsar–Kivelson Hamiltonians and the “Stochastic Matrix Form”
  decomposition}.
\newblock {\em Ann. Phys.}, 318:316--344, 2005.

\bibitem{Morita2008}
S.~Morita and H.~Nishimori.
\newblock {Mathematical foundation of quantum annealing}.
\newblock {\em J, Math. Phys.}, 49:125210, 2008.

\bibitem{Somma2007}
R.~Somma, C.~Batista, and G.~Ortiz.
\newblock {Quantum Approach to Classical Statistical Mechanics}.
\newblock {\em Phys. Rev. Lett.}, 99:030603, 2007.

\bibitem{Morita2007}
S.~Morita and H.~Nishimori.
\newblock {Convergence of Quantum Annealing with Real-Time Schr{\"{o}}dinger
  Dynamics}.
\newblock {\em J. Phys. Soc. Jpn.}, 76:064002, 2007.

\end{thebibliography}

\end{document}